\newcommand{\mn}{\sl}
\def\afterthmseparator{}
\renewcommand{\@begintheorem}[2]{\trivlist
      \item[\hskip \labelsep{\bf #1\ #2\unskip\afterthmseparator}]\mn}
\renewcommand{\@opargbegintheorem}[3]{\trivlist
      \item[\hskip \labelsep{\bf #1\ #2\ (#3)\unskip\afterthmseparator}]\mn}
\newtheorem{theorem}{Theorem}[section]
\newtheorem{lemma}[theorem]{Lemma}
\newtheorem{corollary}[theorem]{Corollary}
\newtheorem{proposition}[theorem]{Proposition}
\newtheorem{rem}[theorem]{Remark}
\newtheorem{que}[theorem]{Question}
\newenvironment{question}{\renewcommand{\mn}{\rm} \begin{que}}{\end{que}}
\newcommand{\qed}{$\;\;\;\Box$}
\newenvironment{proof}{\par\smallbreak{\sl Proof.~}}
{\unskip\nobreak\hfill \qed \par\medbreak}
\newenvironment{claim}{\par\smallbreak{\it Claim:}}{}
\newcommand{\of}[1]{\left( #1 \right)}
\newcommand{\function}[2]{:#1 \rightarrow #2}
\newcommand{\setdef}[2]{\left\{ \hspace{0.5mm} #1 :
\hspace{0.5mm} #2 \right\}}
\newcommand{\PP}[1]{ {\bf P} \left[ #1 \right] }
\newcommand{\EE}[1]{ {\bf E} \left[ #1 \right] }
\newcommand{\refeq}[1]{(\ref{#1})}
\newcommand{\ssc}{\small\sc}
\newcommand{\simi}{{\ssc Similitude of Permutation Groups}}
\newcommand{\conj}{{\ssc Group Conjugacy}}
\newcommand{\greq}{{\ssc Equality of Permutation Groups}}
\newcommand{\griso}{{\ssc Isomorphism of Permutation Groups}}
\newcommand{\elconj}{{\ssc Element Conjugacy}}
\newcommand{\centr}{{\ssc Centralizer and Coset Intersection}}
\newcommand{\problem}[3]{\par\smallskip\par\noindent #1\\
{\it Given:} #2.\\ {\it Recognize if:} #3. \par\smallskip\par}
\newcommand{\view}{\mbox{view}}
\newcommand{\completeness}{\par{\it Completeness.\ }}
\newcommand{\soundness}{\par{\it Soundness.\ }}
\newcommand{\zk}{\par{\it Zero-knowledge.\ }}
\newcommand{\spn}[1]{\langle #1 \rangle}
\newcommand{\ips}[1]{\{ #1 \}}
\newsavebox{\biblio}
\savebox{\biblio}{\parbox{\textwidth}{\normalsize This paper was published in
{\it Visn.\ L'viv.\ Univ., Ser.\ Mekh.-Mat.}
(Bulletin of the Lviv University, Series in Mechanics and Mathematics)
Vol.61, pp.195--205 (2003).\\[1.5mm]
Reviewed in {\it Zentralblatt f\"ur Mathematik} Zbl 1035.03533.}}
\title{
\mbox{}\\[-45mm]\usebox{\biblio}\\[5mm]
Zero-Knowledge Proofs of the Conjugacy\\
for Permutation Groups}
\author{Oleg Verbitsky\\
{\small Department of Algebra}\\
{\small Faculty of Mechanics \& Mathematics}\\
{\small Kyiv National University}\\
{\small Volodymyrska 60}\\
{\small 01033 Kyiv, Ukraine}
}
\date{}
\begin{document}

\maketitle

\begin{abstract}
We design a perfect zero-knowledge proof system for recognition if
two permutation groups are conjugate. It follows, answering a question
posed by O.~G.~Ganyushkin, that this recognition problem is not
NP-complete unless the polynomial-time hierarchy collapses.
\end{abstract}

\section{Introduction}

Let $S_m$ be a symmetric group of order $m$.
We suppose that an element of $S_m$, a permutation of the set
$\{1,2,\ldots,m\}$, is encoded by a binary string of length
$l=\lceil\log_2m!\rceil$, $m(\log_2m-O(1))\le l\le m\log_2m$.
Given $v\in S_m$, $y\in S_m$, and $Y\subseteq S_m$, we denote
$y^v=v^{-1}yv$ and $Y^v=\setdef{y^v}{y\in Y}$. Two subgroups
$G$ and $H$ of $S_m$ are {\em similar\/} if their actions
on $\{1,2,\ldots,m\}$ are isomorphic or, equivalently, if $G=H^v$ for
some $v\in S_m$. If $X\subseteq S_m$, let $\spn X$ denote the group
generated by elements of $X$.

We address the following algorithmic problem.
\problem{\simi}
{$A_0, A_1\subseteq S_m$}
{$A_0$ and $A_1$ are similar}
Note that the \greq\/ problem, that is, recognition if
$\spn{A_0}=\spn{A_1}$ reduces to recognition, given $X\subseteq S_m$ and
$y\in S_m$, if $y\in \spn X$. Since the latter problem is known
to be solvable in time bounded by a polynomial of the input length
\cite{Sim,FHL}, so is \greq. As a consequence, \simi\/ belongs to
NP, the class of decision problems whose yes-instances have
polynomial-time verifiable certificates. The similitude of
$\spn{A_0}$ and $\spn{A_1}$ is certified by a permutation $v$
such that $\spn{A_1}=\spn{A_0^v}$.

Another problem, \griso, is to recognize if $\spn{A_0}$ and $\spn{A_1}$
are isomorphic. This problem also belongs to NP (E.~Luks, see
\cite[Corollary 4.11]{Bab:dm}). Furthermore, it is announced
\cite{BKL} that \griso\/ belongs to the complexity class coAM
(see Section \ref{s:prel} for the definition). By \cite{BHZ} this
implies that \griso\/ is not NP-complete unless the polynomial-time
hierarchy collapses to its second level (for the background on computational
complexity theory the reader is referred to~\cite{GJo})

O.~G.~Ganyushkin \cite{Gan} posed a question if a similar non-completeness
result can be obtained for \simi. In this paper we answer this question in
affirmative. We actually prove a stronger result of independent interest,
namely, that \simi\/ has a perfect zero-knowledge interactive proof system.
It follows by \cite{AHa} that \simi\/ belongs to coAM and is therefore
not NP-complete unless the polynomial-time hierarchy collapses.

Informally speaking, a zero-knowledge proof system for a recognition
problem of a language $L$ is a protocol
for two parties, the prover and the verifier, that allows the prover to
convince the verifier that a given input belongs to $L$,
with high confidence but without communicating the verifier any information
(the rigorous definitions are in Section~\ref{s:prel}).
Our zero-knowledge proof system for \simi\/ uses the underlying ideas
of the zero-knowledge proof systems designed in \cite{GMR} for the
{\ssc Quadratic Residuosity} and in \cite{GMW} for the {\ssc Graph Isomorphism}
problem. In particular, instead of direct proving something about
the input groups $\spn{A_0}$ and $\spn{A_1}$, the prover prefers to
deal with their conjugates $\spn{A_0}^w$ and $\spn{A_1}^w$ via
a random permutation $w$. The crucial point is that these random groups
are indistinguishable by the verifier because they are identically
distributed, provided $\spn{A_0}$ and $\spn{A_1}$ are similar.
However, we here encounter a complication: the verifier
may actually be able to distinguish between $\spn{A_0}^w$ and
$\spn{A_1}^w$ based on particular representations of these groups
by their generators. Overcoming this complication, which does not
arise in \cite{GMR,GMW}, is a novel ingredient of our proof system.

Our result holds true even for a more general problem of recognizing
if $\spn{A_0}$ and $\spn{A_1}$ are conjugated via an element of
the group generated by a given set $U\subseteq S_m$. We furthermore
observe that a similar perfect zero-knowledge proof system works
also for the \elconj\/ problem of recognizing, given $a_0,a_1\in S_m$
and $U\subseteq S_m$, if $a_1=a_0^v$ for some $v\in\spn U$.
A version of this problem where $a_0,a_1\in\spn U$ was proved to
be in coAM in \cite[Corollary 12.3 (i)]{Bab:dm}. Note that the proof
system developed in \cite{Bab:dm} uses different techniques and is not
zero-knowledge.

\section{Preliminaries}\label{s:prel}

Every decision problem under consideration can be represented through
a suitable encoding as a recognition problem for a language $L$
over the binary alphabet. We denote the {\em length\/} of a binary word
$w$ by $|w|$.

An {\em interactive proof system\/} $\ips{V,P}$, further on
abbreviated as IPS, consists of two probabilistic Turing machines,
a polynomial-time {\em verifier\/} $V$ and
a computationally unlimited {\em prover\/} $P$.
The input tape is common for the verifier and the prover.
The verifier and the prover also share a communication tape
which allows message exchange between them.
The system works as follows. First both the machines $V$ and $P$
are given an input $w$ and each of them is given an individual
random string, $r_V$ for $V$ and $r_P$ for $P$. Then $P$ and $V$
alternatingly write messages to one another in the communication tape.
$V$ computes its $i$-th message $a_i$ to $P$ based on the input $w$,
the random string $r_V$, and all previous messages from $P$ to $V$.
$P$ computes its $i$-th message $b_i$ to $V$ based on the input $w$,
the random string $r_P$, and all previous messages from $V$ to $P$.
After a number of message exchanges $V$ terminates interaction and
computes an output based on $w$, $r_V$, and all $b_i$. The output
is denoted by $\ips{V,P}(w)$.
Note that, for a fixed $w$, $\ips{V,P}(w)$ is a random variable
depending on both random strings $r_V$ and $r_P$.

Let $\epsilon(n)$ be a function of a natural argument taking on
positive real values.
We say that $\ips{V,P}$ is an {\em IPS
for a language $L$ with error $\epsilon(n)$\/} if the following two
conditions are fulfilled.

\smallskip

\noindent
{\em Completeness.} If $w\in L$, then $\ips{V,P}(w)=1$
with probability at least $1-\epsilon(|w|)$.\\
{\em Soundness.} If $w\notin L$, then, for an arbitrary interacting
probabilistic Turing machine $P^*$, $\ips{V,P^*}(w)=1$ with probability
at most $\epsilon(|w|)$.

\smallskip

\noindent
We will call any prover $P^*$ interacting with $P$ on input
$w\notin L$ {\em cheating}.
If in the completeness condition we have $\ips{V,P}(w)=1$
with probability 1, we say that $\ips{V,P}$ has {\em one-sided
error\/} $\epsilon(n)$.

An IPS is {\em public-coin\/} if the concatenation
$a_1\ldots a_k$ of the verifier's messages is a prefix of his random
string $r_V$. A {\em round\/} is sending one message from the verifier
to the prover or from the prover to the verifier.
The class AM consists of those languages having
IPSs with error $1/3$ and with number of rounds bounded by a constant
for all inputs. A language $L$ belongs to the class coAM iff its
complement $\{0,1\}^*\setminus L$ belongs to AM.

\begin{proposition}[Goldwasser-Sipser \cite{GSi}]\label{prop:gsi}
Every IPS for a language $L$ can be converted into a public-coin IPS
for $L$ with the same error at cost of increasing the number of rounds
in 2.
\end{proposition}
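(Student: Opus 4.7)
The plan rests on Sipser's \emph{set lower-bound protocol}: given a set $S\subseteq\{0,1\}^m$ recognizable by a polynomial-time predicate and a threshold $K$, a prover can convince a verifier in two public-coin rounds that $|S|\ge K$, with completeness when $|S|\ge K$ and soundness error bounded away from $1$ when $|S|\le K/2$. The mechanism is pairwise-independent hashing: the verifier sends a random hash $h\colon\{0,1\}^m\to[K']$ with $K'$ somewhat smaller than $K$, and the prover replies with an $x\in S$ such that $h(x)=0$, which the verifier checks.

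Given a private-coin IPS $\ips{V,P}$ for $L$ with error $\epsilon$ and $k$ rounds, my first step is to associate to an input $w$ the set $S_w\subseteq\{0,1\}^m$ of verifier random tapes $r_V$ for which a prover strategy exists that drives $V$ to accept~$w$. The original completeness and soundness conditions yield $|S_w|\ge(1-\epsilon)\,2^m$ in the yes-case and $|S_w|\le\epsilon\cdot 2^m$ in the no-case, leaving an ample gap in which to place the threshold~$K$.

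The new public-coin IPS then simulates the original $k$-round interaction by reading the verifier's former private messages out of fresh common coins, and appends the two-round set lower-bound sub-protocol on $S_w$ at the end, with the particular $r_V$ revealed only as the prover's final message. Revealing $r_V$ last is crucial: it stops a cheating prover from tailoring earlier messages to the specific $r_V$ that will hash to zero, so that the cardinality gap on $S_w$ continues to apply. The verifier's final membership check reduces to replaying the deterministic computation of $V$ on $w$ with randomness $r_V$ against the already committed transcript, which is polynomial time. The total round count is $k+2$, and the overall error can be tuned to exactly $\epsilon$ either by choosing the hashing threshold carefully or by a standard parallel-repetition-and-majority stage.

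The main obstacle I anticipate is justifying that the cardinality bound on $S_w$ in the no-case transfers across the shift to public coins against a cheating prover who may adaptively exploit the common coins revealed during the simulation; this is exactly where the late revelation of $r_V$, combined with the pairwise-independence of Sipser's hashing, does the work. Once this reduction is granted, the remaining steps---polynomial-time checkability of $S_w$-membership, precise round counting, and error propagation---are routine.
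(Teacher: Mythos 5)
The paper does not prove this proposition; it cites it from Goldwasser--Sipser, so your proposal has to be measured against their actual construction. As it stands there is a genuine gap at the very first step. You define $S_w$ as the set of verifier tapes $r_V$ \emph{for which a prover strategy exists} that makes $V$ accept, and claim $|S_w|\le\epsilon\cdot 2^m$ in the no-case. Soundness does not give you this: it says that each \emph{fixed} prover strategy $P^*$ wins on at most an $\epsilon$ fraction of tapes, but different tapes may be fooled by different strategies, and the union over all $P^*$ of their winning tapes can be all of $\{0,1\}^m$. A toy example: the verifier privately flips a bit $b$ from $r_V$, sends nothing, and accepts iff the prover's reply equals $b$. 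This is sound with error $1/2$, yet for every tape there is a prover that wins, so your $S_w$ is everything and the cardinality gap you rely on vanishes. Deferring the revelation of $r_V$ to the last message does not repair this, because the quantifier problem is already baked into the definition of the set whose size you are lower-bounding.

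There is a second, related problem: you propose to read the verifier's former messages off fresh public coins, but in a general private-coin protocol those messages are computed from $r_V$ and the history and need not be uniformly distributed, so this substitution changes the distribution of transcripts; and if you instead fix a transcript and lower-bound the set of tapes consistent with it, that set can be exponentially small even in the yes-case (the acceptance mass is spread over exponentially many transcripts), so a single set-lower-bound at the end cannot certify anything. This is precisely why the Goldwasser--Sipser construction runs the hashing/lower-bound subprotocol \emph{round by round}: at each stage the prover exhibits a ``heavy'' next verifier message and proves a lower bound on the number of tapes consistent with the partial conversation, and soundness is argued by a multiplicative accounting of these claimed counts, with $r_V$ revealed and checked for consistency only at the end. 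Your instinct about pairwise-independent hashing and about revealing $r_V$ late is the right raw material, but the single terminal application of the lower-bound protocol to the existentially-quantified set $S_w$ is not a correct reduction.
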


Given an IPS $\ips{V,P}$ and an input $w$,
let $\view_{V,P}(w)=(r'_V,a_1,b_1,\ldots,a_k,b_k)$ where $r'_V$ is a part
of $r_V$ scanned by $V$ during work on $w$ and $a_1,b_1,\ldots,a_k,b_k$ are
all messages from $P$ to $V$ and from $V$ to $P$
($a_1$ may be empty if the first message
is sent by $P$). Note that the verifier's messages $a_1,\ldots,a_k$
could be excluded because they are efficiently computable from the other
components. For a fixed $w$, $\view_{V,P}(w)$ is a random variable depending
on $r_V$ and $r_P$.

An IPS $\ips{V,P}$ is {\em perfect
zero-knowledge on $L$\/} if for every interacting polynomial-time
probabilistic Turing machine $V^*$
there is a probabilistic Turing machine $M_{V^*}$, called a {\em simulator},
that on every input $w\in L$ runs in expected polynomial time
and produces output $M_{V^*}(w)$ which,
if considered as a random variable depending on a random string
of $M_{V^*}$, is distributed identically with $\view_{V^*,P}(w)$.
This notion formalizes the claim that the verifier gets no information
during interaction with the prover: everything that the verifier gets
he can get without the prover by running the simulator.
According to the definition, the verifier learns nothing even if he
deviates from the original program and follows an arbitrary probabilistic
polynomial-time program $V^*$.  We will call the verifier $V$ {\em honest\/}
and all other verifiers $V^*$ {\em cheating}.
If, for all $V^*$, $M_{V^*}$ is implemented by the same simulator
$M$ running $V^*$ as a subroutine, we say that $\ips{V,P}$ is
{\em black-box simulation\/} zero-knowledge.

We call $\epsilon(n)$ {\em negligible\/} if $\epsilon(n)<n^{-c}$
for every $c$ and all $n$ starting from some $n_0(c)$.
The class of languages $L$ having IPSs that are perfect
zero-knowledge on $L$ and have negligible error is denoted by PZK.

\begin{proposition}[Aiello-H\aa stad \cite{AHa}]\label{prop:aha}
$\mbox{PZK}\subseteq\mbox{coAM}$.
\end{proposition}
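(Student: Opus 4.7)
The plan is to use the simulator $M$ for the honest verifier $V$ to extract, in polynomial time, a quantity whose value differs substantially between $w\in L$ and $w\notin L$, and then distinguish the two cases by a constant-round public-coin protocol in the spirit of the Goldwasser--Sipser set-size protocols. By Proposition~\ref{prop:gsi} I may assume the PZK system $\ips{V,P}$ is public-coin, and by parallel repetition I may assume the error $\epsilon(|w|)$ is below $2^{-|w|}$.

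Let $A_w$ denote the polynomial-time decidable set of accepting transcripts $(r_V,b_1,\ldots,b_k)$. Perfect simulation and completeness give, for $w\in L$, that $M(w)$ is distributed identically with $\view_{V,P}(w)$ and therefore $\PP{M(w)\in A_w}\ge 1-\epsilon(|w|)$. The key technical claim is that for $w\notin L$ the simulator cannot place much mass on $A_w$ in a suitable sense. Naively, the simulator might always output a single fixed accepting transcript on a no-instance, so one must measure the breadth of its support on $A_w$, not just the total mass.

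With this in mind, I would consider the set $S_w$ of transcripts $v\in A_w$ with $\PP{M(w)=v}\ge 2^{-t}$ for a suitable polynomial $t(|w|)$. For $w\in L$, $S_w$ captures essentially all of $M(w)$'s mass and, by an entropy argument on the real view, is large. For $w\notin L$, if $|S_w|$ were comparably large then a cheating prover built from $M$ by rejection sampling---in each round, re-running $M$ until the simulated transcript-prefix matches the actual conversation so far and then copying the simulator's next message---would make $V$ accept with probability significantly exceeding $\epsilon(|w|)$, contradicting soundness. Having established this multiplicative gap in $|S_w|$ between $L$ and $L^c$, one applies a Goldwasser--Sipser style constant-round public-coin AM protocol for the set-size predicate to obtain an AM protocol for $L^c$, placing $L$ in $\mbox{coAM}$.

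The main obstacle is the cheating-prover step, which must turn the simulator---a tool only guaranteed useful on yes-instances---into a provably successful cheating strategy on no-instances. The rejection-sampling analysis is delicate because the per-round probability of a random run of $M$ matching the actual transcript-prefix may be tiny; one must calibrate the number of simulator runs per round so that the cheating prover runs in expected polynomial time while its acceptance probability, under the large-$|S_w|$ hypothesis, remains well above $\epsilon(|w|)$.
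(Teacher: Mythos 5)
The paper offers no proof of this proposition: it is imported wholesale from Aiello--H\aa stad \cite{AHa}, so the only fair comparison is with that paper's argument. Your overall architecture is indeed the right one and matches \cite{AHa} (and Fortnow's closely related work): normalize to public coin via Proposition~\ref{prop:gsi}, run the honest-verifier simulator $M$, observe that on yes-instances perfect simulation pins down $M(w)$ exactly, and on no-instances convert a ``too useful'' simulator into a cheating prover by round-by-round rejection sampling, finally packaging the resulting gap with Goldwasser--Sipser set-size protocols. You have also correctly identified the central danger, namely that on a no-instance the simulator may output a single fixed accepting transcript, so total accepting mass is the wrong statistic.

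The genuine gap is that the statistic you substitute for it --- the size of the level set $S_w$ of accepting transcripts with $\PP{M(w)=v}\ge 2^{-t}$ --- does not support either half of your claimed dichotomy. On the no-instance side, a large $S_w$ does \emph{not} yield a successful cheating prover: all transcripts in $S_w$ could share the same verifier messages $a_1,\ldots,a_k$ and differ only in the prover messages $b_i$, in which case the real verifier sends that particular coin sequence with probability about $2^{-\sum|a_i|}$ and your rejection-sampling prover almost never reaches an accepting conversation. The acceptance probability of that prover is $\sum_v \PP{M(w)=v}\prod_i 2^{-|a_i|}/p_i(v)$, where $p_i(v)$ is the simulator's conditional probability of emitting $a_i$ given the preceding prefix; what must be bounded is therefore the round-by-round skew of the simulated verifier coins away from uniform, not the breadth of the transcript distribution. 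On the yes-instance side, $|S_w|$ for a fixed polynomial threshold $t$ need not be large either, since the honest prover may be randomized and spread the view over far more than $2^t$ transcripts, each of probability below $2^{-t}$. Repairing both defects is exactly the technical content of \cite{AHa}: one works with the sets of verifier random tapes consistent with each conversation prefix, shows that perfect (or statistical) simulation forces these to have the ``correct'' cardinalities on yes-instances, and certifies the deviation on no-instances with both lower-bound and upper-bound set-size protocols --- the upper bounds being the part that the plain Goldwasser--Sipser protocol does not give you. As written, your sketch names the right tools but omits the one quantity that makes the argument go through.
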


The {\em $k(n)$-fold sequential composition\/} of an
IPS $\ips{V,P}$ is the IPS
$\ips{V',P'}$ in which $V'$ and $P'$ on input $w$ execute
the programs of $V$ and $P$ sequentially $k(|w|)$ times, each time
with independent choice of random strings $r_V$ and $r_P$.
At the end of interaction $V'$ outputs 1 iff $\ips{V,P}(w)=1$
in all $k(|w|)$ executions. The initial system $\ips{V,P}$
is called {\em atomic}.

\begin{proposition}\label{prop:seqrep}
\mbox{}
\begin{enumerate}
\item
If $\ips{V',P'}$ is the $k(n)$-fold sequential composition of $\ips{V,P}$,
then
$$
\max_{P^*}\PP{\ips{V',P^*}(w)=1}=
\of{\max_{P^*}\PP{\ips{V,P^*}(w)=1}}^{k(|w|)}.
$$
Consequently, if $\ips{V,P}$ is an IPS for a language $L$ with one-sided
constant error $\epsilon$, then $\ips{V',P'}$ is an IPS for $L$
with one-sided error $\epsilon^{k(n)}$.
\item
(Goldreich-Oren \cite{GOr}, see also \cite[Lemma 6.19]{Gol})
If in addition $\ips{V,P}$ is black-box simulation perfect zero-knowledge
on $L$, then $\ips{V',P'}$ is perfect zero-knowledge on $L$.
\end{enumerate}
\end{proposition}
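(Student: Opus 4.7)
The plan is to treat the two parts of the proposition separately, with part~1 requiring a careful analysis of how the optimal cheating probability multiplies across independent rounds, and part~2 reducing to the black-box simulator of the atomic system.

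For part~1, write $p=\max_{P^*}\PP{\ips{V,P^*}(w)=1}$ and, for a fixed cheating strategy $P'^*$ against $\ips{V',P'}$, let $A_j$ be the event that $V'$ accepts the $j$-th atomic execution. Then $\ips{V',P'^*}(w)=1$ iff $A_1\wedge\cdots\wedge A_{k(|w|)}$ holds, and
$$
\PP{\ips{V',P'^*}(w)=1}=\prod_{j=1}^{k(|w|)}\PP{A_j\mid A_1,\ldots,A_{j-1}}.
$$
I would argue that, conditioned on an arbitrary history of the first $j-1$ atomic executions (including messages and the portion of $r_V$ consumed there), the $j$-th execution draws a fresh independent random string for $V$, so it is an instance of the atomic protocol against a cheating prover $P^*_j$ derived from $P'^*$ and the history. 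Each conditional probability is therefore at most $p$, and the product is at most $p^{k(|w|)}$. The matching lower bound is attained by the prover that runs the optimal atomic cheater $P^*$ independently in every round. This equality immediately yields the completeness/soundness consequence: when $w\in L$ one-sided error at the atomic level means every round accepts with probability~$1$, and when $w\notin L$ we have $p\le\epsilon$, giving overall error at most $\epsilon^{k(n)}$.

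For part~2, given a polynomial-time cheating verifier $V'^*$ for the composite protocol and letting $M$ denote the atomic black-box simulator, I would construct a simulator $M'$ that processes the $k(|w|)$ rounds one at a time. Before entering round $j$, $M'$ maintains a partial view $\tau_{j-1}=(r,T_1,\ldots,T_{j-1})$ consisting of the random bits of $V'^*$ consumed so far together with the transcripts of the previous atomic executions. I would define an atomic cheating verifier $V^*_j$ that starts from this state and runs $V'^*$ through its $j$-th sub-interaction; then $M'$ invokes $M(w)$ with black-box access to $V^*_j$ to obtain the next transcript $T_j$ and appends it to $\tau_{j-1}$. After all $k(|w|)$ rounds $M'$ outputs $\tau_{k(|w|)}$. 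Observe that the black-box hypothesis is exactly what allows $M$ to be fed the adaptively chosen $V^*_j$ without any modification to $M$ itself.

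Two things remain to verify. First, expected polynomial running time: by linearity of expectation the total expected time is $k(|w|)$ times the expected running time of $M$, hence polynomial. Second, distributional identity with $\view_{V'^*,P'}(w)$: by induction on $j$, after round $j$ the distribution of $\tau_j$ produced by $M'$ equals the distribution of the corresponding prefix of $\view_{V'^*,P'}(w)$, where the inductive step uses the identical-distribution property of $M$ applied to $V^*_j$ together with the fact that round $j$ of the real interaction is run with fresh independent randomness. The main obstacle in the whole argument is this distributional bookkeeping in part~2, because $V'^*$ may behave adaptively across rounds and one must guarantee that the atomic simulator's output distribution remains correct for every possible history fed into $V^*_j$; this is exactly what the black-box perfect zero-knowledge hypothesis delivers, and explains why the result cannot be taken for granted without it.
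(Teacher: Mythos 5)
The paper states this proposition without proof --- part 1 is treated as folklore and part 2 is attributed to Goldreich--Oren \cite{GOr} --- so there is no in-paper argument to compare against. Your proof is the standard one from the cited sources and is correct: for part 1, the decomposition into conditional acceptance probabilities together with the derived prover $P^*_j$ (which hardwires the history and faces a fresh, independent verifier random string in the $j$-th execution) gives the upper bound $p^{k(|w|)}$, and independent repetition of the optimal atomic cheater gives the matching lower bound; for part 2, invoking the atomic black-box simulator $M$ on the adaptively derived verifiers $V^*_j$ round by round is exactly the Goldreich--Oren construction, and you correctly identify the black-box hypothesis as the point that makes this legitimate. The only steps you gloss over are minor: that the supremum over provers is attained (for each fixed input this is a finite optimization, so an optimal deterministic strategy exists), and that the linearity-of-expectation bound on $M'$'s running time needs the expected running time of $M$ to be bounded by a single polynomial uniformly over the oracle verifiers $V^*_j$, which is precisely what the paper's black-box definition of the simulator provides.
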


In the {\em $k(n)$-fold parallel composition\/} $\ips{V'',P''}$
of $\ips{V,P}$, the program of $\ips{V,P}$
is executed $k(|w|)$ times in parallel, that is, in each round
all $k(|w|)$ versions of a message are sent from one machine to another
at once as a long single message. In every parallel execution
$V''$ and $P''$ use independent copies of $r_V$ and $r_P$.
At the end of interaction $V''$ outputs 1 iff $\ips{V,P}(w)=1$
in all $k(|w|)$ executions.

\begin{proposition}\label{prop:parrep}
If $\ips{V'',P''}$ is the $k(n)$-fold parallel composition of $\ips{V,P}$,
then
$$
\max_{P^*}\PP{\ips{V'',P^*}(w)=1}=
\of{\max_{P^*}\PP{\ips{V,P^*}(w)=1}}^{k(|w|)}.
$$
\end{proposition}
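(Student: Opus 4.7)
The plan is to establish the two inequalities separately. The $\ge$ direction is immediate: a cheating prover for the $k(|w|)$-fold parallel composition can simply run $k(|w|)$ independent copies of an optimal atomic cheating prover, one per session. Since $V''$ uses independent random tapes in the $k(|w|)$ copies, the joint acceptance probability factorises and equals $p^{k(|w|)}$, where $p=\max_{P^*}\PP{\ips{V,P^*}(w)=1}$.

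For the upper bound I argue by induction on the game-tree depth of the atomic protocol. Fix the input $w$ and view $\ips{V,P}$ as a finite tree whose internal nodes are labelled by the party to move; at a verifier node the children are weighted by the conditional probability that $V$ sends the corresponding message given the history, and at a prover node the prover picks a child. For a partial transcript $\tau$, let $f(\tau)$ denote the maximum, taken over all continuation strategies, of the conditional probability that $V$ eventually accepts. Define $f_k(\tau_1,\ldots,\tau_{k(|w|)})$ analogously for the parallel composition, conditioning on the $j$-th session having been played out to $\tau_j$. I claim that
$$
f_k(\tau_1,\ldots,\tau_{k(|w|)})=\prod_{j=1}^{k(|w|)} f(\tau_j).
$$
Evaluated at the empty transcripts this gives $\max_{P^*}\PP{\ips{V'',P^*}(w)=1}=f(\varepsilon)^{k(|w|)}=p^{k(|w|)}$, matching the lower bound.

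The claim is proved by induction from leaves upward. At a leaf, $f_k$ is the indicator that all $k(|w|)$ copies accept, which trivially factors as $\prod_j f(\tau_j)$. At a verifier round, $V''$ sends an independent tuple $(a^1,\ldots,a^{k(|w|)})$ whose $j$-th coordinate is distributed as the atomic $V$'s next message given $\tau_j$; by independence of the copies' randomness and the induction hypothesis,
$$
f_k(\tau_1,\ldots,\tau_{k(|w|)})=\EE{\prod_j f(\tau_j,a^j)}=\prod_j \EE{f(\tau_j,a^j)}=\prod_j f(\tau_j).
$$
The one genuinely non-trivial step is the prover round, where $f_k(\tau_1,\ldots,\tau_{k(|w|)})=\max_{(b^1,\ldots,b^{k(|w|)})}\prod_j f(\tau_j,b^j)$. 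Since the components $b^j$ of the prover's tuple are subject to no joint constraint and the $f$-values are nonnegative, the maximum factorises coordinatewise and equals $\prod_j\max_{b^j}f(\tau_j,b^j)=\prod_j f(\tau_j)$. This coordinatewise optimisation is the crux of the whole argument: it expresses the fact that a single parallel cheater gains no leverage from correlating his answers across copies beyond what independent optimal play in each copy already achieves. I expect this to be the only real obstacle; the verifier step is purely formal and relies only on the independence of the copies' randomness — a property that can fail in multi-prover settings and explains why the analogous equality does not hold there.
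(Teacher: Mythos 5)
The paper states Proposition \ref{prop:parrep} without any proof at all -- unlike Proposition \ref{prop:seqrep}(2) it is not even accompanied by a reference, being treated as a standard fact about single-prover interactive proofs -- so there is no in-paper argument to compare yours against line by line. Your proof is the standard folklore argument and is essentially correct: the lower bound via $k(|w|)$ independent optimal atomic provers, and the upper bound via backward induction on the game tree, with the prover-node identity $\max_{(b^1,\ldots,b^{k})}\prod_j g_j(b^j)=\prod_j\max_{b^j}g_j(b^j)$ for nonnegative $g_j$ being, as you say, the crux. Two points should be made explicit before the write-up is airtight. First, at a leaf the verifier's decision may still depend on the part of $r_V$ not determined by the transcript, so $f$ at a leaf is a conditional acceptance probability rather than an indicator; the product formula there is not ``trivial'' but already an instance of the second point. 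Second, what your verifier step actually requires is not the a priori independence of $r_V^1,\ldots,r_V^{k}$ but their \emph{conditional} independence given the joint partial transcript, and that transcript is produced by a prover who correlates the copies. This does hold, but deserves a sentence: restricting to deterministic $P^*$ (legitimate when maximising), the event that the interaction yields a given joint transcript $(\tau_1,\ldots,\tau_{k})$ is the intersection of a condition on the prover's answers, which involves no verifier coins, with $\bigcap_j\{r_V^j\in R_j(\tau_j)\}$, where $R_j(\tau_j)$ is the set of coins of copy $j$ consistent with the verifier messages in $\tau_j$; hence the conditional law of $(r_V^1,\ldots,r_V^{k})$ given the transcript is a product whose $j$-th factor depends only on $\tau_j$, which is exactly what makes both $\EE{\prod_j f(\tau_j,a^j)}=\prod_j\EE{f(\tau_j,a^j)}$ and the identification of each factor with the atomic value $f(\tau_j)$ legitimate. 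With that observation inserted, your induction closes and the stated equality follows.
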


\section{Group Conjugacy}

We consider the following extension of \simi.
\problem{\conj}
{$A_0,A_1,U\subseteq S_m$}
{$\spn{A_1}=\spn{A_0}^v$ for some $v\in\spn U$}

\begin{theorem}\label{thm:grconj}
\conj\/ is in PZK.
\end{theorem}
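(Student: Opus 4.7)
The plan is to adapt the Goldreich-Micali-Wigderson template for \griso\/ to the permutation-group setting, and then amplify by sequential repetition. In the atomic protocol, the prover, who possesses a witness $v\in\spn U$ satisfying $\spn{A_1}=\spn{A_0}^v$, samples $w\in\spn U$ uniformly at random and sends the verifier a \emph{canonical} description $C$ of the group $K=\spn{A_0}^w=\spn{A_1}^{v^{-1}w}$. The verifier responds with a random challenge bit $c\in\{0,1\}$, and the prover returns $w_0=w$ if $c=0$ or $w_1=v^{-1}w$ if $c=1$. The verifier accepts iff $w_c\in\spn U$ and the canonical description of $\spn{A_c}^{w_c}$ equals $C$.

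The step I expect to be most delicate is producing the canonical form $C(K)$, since this is precisely the complication flagged in the introduction: a verifier might distinguish $\spn{A_0}^w$ from $\spn{A_1}^{v^{-1}w}$ through their generator-based representations. I would take $C(K)$ to be the lexicographically smallest strong generating set of $K$ with respect to the fixed base $(1,2,\ldots,m)$, obtainable by running the Schreier-Sims algorithm \cite{Sim,FHL} on any generating set of $K$ and then lex-reducing each transversal; this function is polynomial-time computable and depends only on $K$. Uniform sampling from $\spn U$ is likewise polynomial-time, using a strong generating set of $\spn U$ and picking independent uniform coset representatives along the stabilizer chain.

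Completeness of the atomic protocol is then immediate. For soundness, I observe that on a NO-instance no single group $K$ can satisfy $K=\spn{A_0}^{w_0}=\spn{A_1}^{w_1}$ for any $w_0,w_1\in\spn U$, since this would yield $\spn{A_1}=\spn{A_0}^{w_0 w_1^{-1}}$ with $w_0 w_1^{-1}\in\spn U$; hence any cheating prover fails for at least one value of $c$, and the one-sided error is $1/2$. For black-box perfect zero-knowledge I would design a simulator $M$ that picks $c'\in\{0,1\}$ and $w'\in\spn U$ uniformly, computes $C=C(\spn{A_{c'}}^{w'})$, feeds it to $V^*$, and outputs the resulting transcript if $V^*$'s reply equals $c'$ (otherwise it rewinds). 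On a YES-instance the bijection $w\mapsto v^{-1}w$ of $\spn U$ shows that $\spn{A_0}^w$ and $\spn{A_1}^{v^{-1}w}$ are identically distributed, so the marginal distribution of $C$ shown to $V^*$ coincides with that in a real interaction and $\PP{c=c'}=1/2$ regardless of $V^*$; hence $M$ halts in expected polynomial time with output distributed identically to $\view_{V^*,P}$.

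Since the atomic system is black-box PZK with one-sided constant error $1/2$, the theorem follows by taking its $n$-fold sequential composition and invoking Proposition~\ref{prop:seqrep}, which drives the error to $2^{-n}$ while preserving perfect zero-knowledge.
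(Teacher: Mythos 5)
Your proposal is correct in outline and follows the same commit--challenge--reveal skeleton as the paper (random conjugate of the input groups, one challenge bit, reveal the conjugator, amplify by $n$-fold sequential composition via Proposition~\ref{prop:seqrep}), but it resolves the central difficulty --- that the verifier could distinguish $\spn{A_0}^w$ from $\spn{A_1}^{v^{-1}w}$ through their generator-based representations --- in a genuinely different way. The paper \emph{randomizes} the representation: the prover sends a uniformly random element of $G(A,k)$ with $k=4m$, and this forces it to prove Lemma~\ref{lem:gen}, which rests on the nontrivial bound that subgroup chains in $S_m$ have length less than $2m$; the lemma is what makes rejection sampling from $G(A,k)$ feasible in expected polynomial time for both the prover and the simulator. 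You instead \emph{derandomize} the representation via a canonical form $C(K)$, which eliminates Lemma~\ref{lem:gen} and the inner rejection loop entirely and makes the zero-knowledge argument cleaner (the first message is a deterministic function of $K$, whose distribution is manifestly independent of $\alpha$ on yes-instances). The trade is that the entire weight of your argument now rests on the one assertion you state without proof: that the lex-least strong generating set relative to the base $(1,\ldots,m)$ is well-defined by $K$ alone and computable in polynomial time from an arbitrary generating set. This is true --- the stabilizer chain and its transversal cosets depend only on $K$ and the base, and the lexicographically least element of a coset $Hg$ is obtained by greedily minimizing the image of each base point along the chain --- but it is the analogue of Lemma~\ref{lem:gen} in your route and deserves an explicit proof rather than a parenthetical. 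With that supplied, your argument is complete; the completeness, soundness, and simulator analyses you give match the paper's in substance.
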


Designing a perfect zero-knowledge interactive proof system for \conj,
we will make use of the following facts due to Sims~\cite{Sim,FHL}.
\begin{enumerate}
\item
There is a polynomial-time algorithm that, given $X\subseteq S_m$
and $y\in S_m$, recognizes if $y\in\spn X$. As a consequence,
there is a polynomial-time algorithm that, given $X\subseteq S_m$
and $Y\subseteq S_m$, recognizes if $\spn X=\spn Y$.
\item
There is a probabilistic polynomial-time algorithm that, given
$X\subseteq S_m$, outputs a random element of $\spn X$.
Here and further on, by a {\em random element\/} of a finite set $Z$ we mean
a random variable uniformly distributed over $Z$.
\end{enumerate}

Given $A\subseteq S_m$ and a number $k$, define
$$
G(A,k)=\setdef{(x_1,\ldots,x_k)}{x_i\in S_m, \spn{x_1,\ldots,x_k}=\spn A}.
$$

In the sequel, the length of the binary encoding of an input
$A_0,A_1,U\subseteq S_m$ will be denoted by $n$. We set $k=4m$.
On input $(A_0,A_1,U)$, the IPS we design is the
$n$-fold sequential repetition of the following 3-round system.
We will say that the verifier $V$ {\em accepts\/} if $\ips{V,P}(A_0,A_1,U)=1$
and {\em rejects\/} otherwise.

\smallskip

If $(A_0,A_1,U)$ is yes-instance of \conj, $P$ finds an element
$v\in\spn U$ such that $\spn{A_1}=\spn{A_0}^v$.

\smallskip

{\it 1st round.}

\noindent
$P$ generates a random element $u\in\spn U$, computes $A=A_1^u$,
chooses a random element $(a_1,\ldots,a_k)$ in $G(A,k)$,
and sends $(a_1,\ldots,a_k)$ to $V$. $V$ checks if all $a_i\in S_m$
and, if not (this is possible in the case of a cheating prover),
halts and rejects.

\medskip

{\it 2nd round.}

\noindent
$V$ chooses a random bit $\beta\in\{0,1\}$ and sends it to $P$.

\smallskip

{\it 3rd round.}

\noindent
{\it Case $\beta=1$.}
$P$ sends $V$ the permutation $w=u$. $V$ checks if
$w\in\spn U$ and if $\spn{a_1,\ldots,a_k}=\spn{A_1^w}$.

\noindent
{\it Case $\beta\ne 1$} (this includes the possibility of a message
$\beta\notin\{0,1\}$ produced by a cheating verifier).
$P$ computes $w=vu$ and sends $w$ to $V$. $V$ checks if
$w\in\spn U$ and if $\spn{a_1,\ldots,a_k}=\spn{A_0^w}$.

$V$ halts and accepts if the conditions are checked successfully
and rejects otherwise.

\medskip

We now need to prove that this system is indeed
an IPS for \conj\/ and, moreover, that it is perfect zero-knowledge.

\completeness
To show that the prover is able to follow the above protocol,
we have to check that $G(A,k)\ne\emptyset$ for $k=4m$.
The latter is true by the fact that every subgroup of $S_m$
can be generated by at most $m-1$ elements \cite{Jer}.
If $\spn{A_0}$ and $\spn{A_1}$ are conjugate via an element of $\spn U$
and the prover and the verifier follow the protocol, then
$\spn{a_1,\ldots,a_k}=\spn A=\spn{A_1^u}=\spn{A_0^{vu}}$.
Therefore, the verifier accepts with probability 1 both in the
atomic and the composed systems.

\soundness
Assume that $\spn{A_0}$ and $\spn{A_1}$ are not conjugate
via an element of $\spn U$ and consider an arbitrary cheating prover $P^*$.
Observe that if both $\spn{a_1,\ldots,a_k}=\spn{A_1^u}$ and
$\spn{a_1,\ldots,a_k}=\spn{A_0^w}$ with $u,w\in\spn U$, then
$\spn{A_1}=\spn{A_0}^{wu^{-1}}$.
It follows that $V$ rejects for at least one value of $\beta$ and,
therefore, in the atomic system $V$ accepts with probability at most 1/2.
By Proposition \ref{prop:seqrep}~(1), in the composed system $V$
accepts with probability at most $2^{-n}$.

\zk
We will need the following fact.

\begin{lemma}\label{lem:gen}
Let $G$ be a subgroup of $S_m$ and $a_1,\ldots,a_k$ be random independent
elements of $G$.
\begin{enumerate}
\item
If $k=4m$, then $\spn{a_1,\ldots,a_k}=G$ with probability more than 1/2.
\item
If $k=8m$, then $\spn{a_1,\ldots,a_k}=G$ with probability more than
$1-2^{-m}$.
\end{enumerate}
\end{lemma}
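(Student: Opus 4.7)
The plan is to analyze the random subgroup chain $H_i=\langle a_1,\ldots,a_i\rangle$ and couple its growth with a fair Bernoulli process. The core observation is Lagrange's theorem: whenever $H_{i-1}$ is a proper subgroup of $G$, $\PP{a_i\in H_{i-1}}=|H_{i-1}|/|G|\le 1/2$, so the chain strictly advances at each step with probability at least $1/2$ until it reaches $G$. Writing $\tau=\min\{i:H_i=G\}$, both parts of the lemma reduce to upper bounds on $\PP{\tau>k}$.

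To convert ``advances with probability at least $1/2$'' into a bound on $\tau$, I would invoke the classical linear bound on the length of subgroup chains in $S_m$: any strictly increasing chain of subgroups of $S_m$ has length at most $L^*=O(m)$ (in fact $\lceil 3m/2\rceil$ suffices, by Babai), so certainly $L^*\le 2m$. Once the random chain has undergone $L^*$ strict enlargements it must already equal $G$, for otherwise one could append $G$ and exceed the permitted maximum chain length. Constructing the natural coupling $X_i\ge Y_i$ between the advance-indicators $X_i=\mathbf{1}[H_i>H_{i-1}]$ and i.i.d.\ Bernoulli$(1/2)$ variables $Y_i$---valid whenever $H_{i-1}\neq G$, since the conditional probability of an advance is then at least $1/2$---one concludes
$$\PP{\tau>k}\;\le\;\PP{\mathrm{Bin}(k,1/2)<L^*}.$$

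The two parts of the lemma now follow from routine binomial tail estimates. For $k=4m$ the mean of $\mathrm{Bin}(4m,1/2)$ equals $2m\ge L^*$, and symmetry gives $\PP{\mathrm{Bin}(4m,1/2)\ge 2m}>1/2$. For $k=8m$, Hoeffding's inequality yields $\PP{\mathrm{Bin}(8m,1/2)<2m}\le e^{-m}<2^{-m}$. The main obstacle is justifying the linear chain-length bound on $S_m$: the elementary bound $L^*(G)\le\log_2|G|=O(m\log m)$ obtained from index-doubling alone is too weak to deliver the stated constants, so one must rely on the sharper group-theoretic estimate.
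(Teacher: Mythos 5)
Your argument is correct and is essentially the paper's own proof: both rest on the linear bound for subgroup chains in $S_m$ (length less than $2m$), Lagrange's theorem giving each generator a chance at least $1/2$ of enlarging a proper subgroup, and then a median/symmetry argument for part 1 and a Chernoff--Hoeffding tail bound for part 2. The only cosmetic difference is that you control the dependence via a domination coupling with i.i.d.\ Bernoulli$(1/2)$ variables, whereas the paper pads each proper subgroup $\spn{a_1,\ldots,a_i}$ to a superset of exact density $p\le 1/2$ so that the relevant indicators become exactly i.i.d.\ Bernoulli$(p)$.
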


\begin{proof}
We will estimate from above the probability that
$\spn{a_1,\ldots,a_k}\ne G$. This inequality is equivalent with
the condition that all $\spn{a_1}$, $\spn{a_1,a_2}$, \ldots,
$\spn{a_1,\ldots,a_k}$ are proper subgroups of $G$. Assume that this
condition is true. Since every subgroup chain in $S_m$ has length
less than $2m$ \cite{Bab:chain,CST}, less than $2m-1$ inclusions
among $\spn{a_1}\subseteq\spn{a_1,a_2}\subseteq\cdots\subseteq
\spn{a_1,\ldots,a_k}$ are proper. In other words, less than $2m-1$
of the events $a_2\notin\spn{a_1}$, $a_3\notin\spn{a_1,a_2}$, \ldots,
$a_k\notin\spn{a_1,\ldots,a_{k-1}}$ occur. Equivalently, there occur
more than $k-2m$ of the events $a_2\in\spn{a_1}$, $a_3\in\spn{a_1,a_2}$,
\ldots, $a_k\in\spn{a_1,\ldots,a_{k-1}}$.

Let $p=|H|/|G|$ be the maximum density of a proper subgroup $H$ of $G$.
Given $a_1,\ldots,a_i\in G$, define $E(a_1,\ldots,a_i)$ to be an
arbitrary subset of $G$ fixed so that
\begin{description}
\item[(i)]
$E(a_1,\ldots,a_i)$ has density $p$ in $G$, and
\item[(ii)]
$E(a_1,\ldots,a_i)$ contains $\spn{a_1,\ldots,a_i}$ if the latter is
a proper subgroup of $G$.
\end{description}
If $\spn{a_1,\ldots,a_k}\ne G$, there must occur more than
$k-2m$ of the events
\begin{equation}\label{eq:events}
a_2\in E(a_1),\ a_3\in E(a_1,a_2),\ldots,a_k\in E(a_1,\ldots,a_{k-1}).
\end{equation}
It suffices to show that the probability of so many occurrences in
\refeq{eq:events} is small enough. Set $X_i(a_1,\ldots,a_k)$ to be
equal to 1 if $a_{i+1}\in E(a_1,\ldots,a_i)$ and to 0 otherwise.
In these terms, we have to estimate the probability that
\begin{equation}\label{eq:bernul}
\sum_{i=1}^{k-1} X_i > k-2m.
\end{equation}
It is easy to calculate that an arbitrary set of $l$ events in
\refeq{eq:events} occurs with probability $p^l$. Hence the events
\refeq{eq:events} as well as the random variables $X_1,\ldots,X_{k-1}$
are mutually independent, and $X_1,\ldots,X_{k-1}$ are successive
Bernoulli trails with success probability $p$.

If $k=4m$, the inequality \refeq{eq:bernul} implies that strictly more
than a half of all the trails are successful. Since $p\le 1/2$, this
happens with probability less than 1/2 and the item 1 of the lemma follows.

If $k=8m$, the inequality \refeq{eq:bernul} implies
$$
\frac1{k-1}\sum_{i=1}^{k-1}X_i>p+\epsilon
$$
with deviation $\epsilon=1/4$ from the mean value
$p=\EE{\frac1{k-1}\sum_{i=1}^{k-1}X_i}$.
By the Chernoff bound \cite[Theorem A.4]{ASp}, this happens with
probability less than
$\exp\of{-2\epsilon^2(k-1)}\allowbreak =\exp(-m+\frac18)<2^{-m}$.
This proves the item 2 of the lemma.
\end{proof}

By Proposition \ref{prop:seqrep}~(2) it suffices to show that the atomic
system is black-box simulation perfect zero-knowledge.
We describe a probabilistic simulator $M$ that uses
the program of $V^*$ as a subroutine and, for each $V^*$, runs
in expected polynomial time. Assume that the running time
of $V^*$ is bounded by a polynomial $q$ in the input size.
On input $(A_0,A_1,U)$ of length $n$, $M$ will run the program of
$V^*$ on the same input with random string $r$,
where $r$ is the prefix of $M$'s random
string of length $q(n)$. In all other cases of randomization, $M$ will use
the remaining part of its random string.

Having received an input $(A_0,A_1,U)$, the simulator
$M$ chooses a random element $w\in\spn U$ and a
random bit $\alpha\in\{0,1\}$. Then $M$ randomly and independently
chooses elements $a_1,\ldots,a_k$ in $\spn{A_\alpha^w}$ and checks if
\begin{equation}\label{eq:succ}
\spn{a_1,\ldots,a_k}=\spn{A_\alpha^w}.
\end{equation}
If \refeq{eq:succ} is not true, $M$ repeats the choice of $a_1,\ldots,a_k$
again and again until \refeq{eq:succ} is fulfilled.
By Lemma \ref{lem:gen}~(1), $M$ succeeds in at most 2 attempts on average.
The resulting sequence $(a_1,\ldots,a_k)$ is uniformly distributed
on $G(A_\alpha^w,k)$.
Then $M$ computes $\beta=V^*(A_0,A_1,U,r,a_1,\ldots,a_k)$,
the message that $V^*$ sends $P$ in the 2-nd round
after receiving $P$'s message $a_1,\ldots,a_k$.
If $\beta$ and $\alpha$ are simultaneously equal to or different from 1,
$M$ halts and outputs $(r',a_1,\ldots,a_k,\beta,w)$, where $r'$ is the
prefix of $r$ that $V^*$ actually uses after reading the input
$(A_0,A_1,U)$ and the prover's message $a_1,\ldots,a_k$.
If exactly one of $\beta$ and $\alpha$ is equal to 1, then $M$
restarts the same program from the very beginning with another
independent choice of $w$, $\alpha$, and $a_1,\ldots,a_k$.
Notice that it might happen that in unsuccessful attempts
$V^*$ used a prefix of $r$ longer than $r'$.

\medskip

We first check that, for each $V^*$, the simulator $M$ terminates
in expected polynomial
time whenever $A_0$ and $A_1$ are conjugated via an element of $\spn U$.
Since $V^*$ is polynomial-time, one attempt to
pass the body of $M$'s program takes time bounded by a polynomial of $n$.
Observe that $\alpha$ and $(r,a_1,\ldots,a_k)$ are
independent. Really, independently of whether $\alpha=0$ or $\alpha=1$,
$r$ is a random string of length $q(n)$ and $(a_1,\ldots,a_k)$ is
a random element of $G(A,k)$, where $A$ itself is a random element
of the orbit $\setdef{A_0^w}{w\in\spn U}=\setdef{A_1^w}{w\in\spn U}$
under the conjugating action of $\spn U$ on subsets of $S_m$.
It follows that $\alpha$ and $\beta$ are independent and therefore
an execution of the body of $M$'s program is successful with probability 1/2.
We conclude that on average $M$'s program is executed twice
and this takes expected polynomial time.

We finally need to check that, whenever $A_0$ and $A_1$ are conjugated
via an element of $\spn U$, for each $V^*$ the output $M(A_0,A_1,U)$ is
distributed identically with $\view_{V^*,P}(A_0,A_1,U)$.
Notice that both the random variables depend on $V^*$'s random string $r$.
It therefore suffices to show that the distributions are identical
when conditioned on an arbitrary fixed $r$.
Denote these conditional distributions by
$D_M(A_0,A_1,U,r)$ and $D_{V^*,P}(A_0,A_1,U,r)$.
We will show that both $D_M(A_0,A_1,U,r)$ and $D_{V^*,P}(A_0,A_1,U,r)$
are uniform on the set
\begin{eqnarray*}
S=\Bigl\{ \hspace{0.5mm} (a_1,\ldots,a_k,\beta,w)\ : \
w\in\spn U,&&\hspace{-0.8em} \beta=V^*(A_0,A_1,U,r,a_1,\ldots,a_k),\\
&& (a_1,\ldots,a_k)\in G(A^w_{\delta(\beta)},k)\Bigr\},
\end{eqnarray*}
where $\delta(\beta)$ is equal to 1 if $\beta=1$ and to 0 otherwise.

Let $v\in\spn U$, such that $\spn{A_1}=\spn{A_0}^v$, be chosen by
the prover $P$ on input $(A_0,A_1,U)$.
Given $x_1,\ldots,x_k\in G(A_1,k)$ and $u\in\spn U$, define
$\phi(x_1,\ldots,x_k,u)=(a_1,\ldots,a_k,\beta,w)$ by
$a_i=x_i^u$ for all $i\le k$, $\beta=V^*(A_0,A_1,U,r,a_1,\ldots,a_k)$,
and $w=v^{1-\delta(\beta)}u$. As easily seen, $\phi(x_1,\ldots,x_k,u)\in S$.

\begin{claim}{}
The map $\phi\function{G(A_1,k)\times\spn U}S$ is one-to-one.
\end{claim}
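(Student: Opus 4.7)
My plan is to prove injectivity by writing down an explicit left inverse $\psi\function{S}{G(A_1,k)\times\spn U}$ and checking that $\psi\circ\phi$ is the identity. The key observation is that the middle coordinate $\beta$ of any image $(a_1,\ldots,a_k,\beta,w)=\phi(x_1,\ldots,x_k,u)$ is a deterministic function of $(a_1,\ldots,a_k)$ once $r$ is fixed: namely, $\beta=V^*(A_0,A_1,U,r,a_1,\ldots,a_k)$. Consequently, from the image alone one can read off $\delta(\beta)$ without knowing the preimage.

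Given this, I would define $\psi(a_1,\ldots,a_k,\beta,w)=(x_1,\ldots,x_k,u)$ by first setting $u=v^{\delta(\beta)-1}w$ (so that $u=w$ when $\beta=1$ and $u=v^{-1}w$ otherwise), and then setting $x_i=u a_i u^{-1}$. If $(a_1,\ldots,a_k,\beta,w)=\phi(x_1,\ldots,x_k,u)$, the rule $w=v^{1-\delta(\beta)}u$ inverts to give back the original $u$, and the rule $a_i=x_i^u=u^{-1}x_iu$ inverts to give back the original $x_i$; so $\psi$ is indeed a left inverse of $\phi$.

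The one small thing to verify is that $\phi$ in fact lands in the set $S$, so that $\psi$ is defined on all images; this just requires checking that $\spn{a_1,\ldots,a_k}=\spn{A^w_{\delta(\beta)}}$. When $\beta=1$ we have $w=u$, so $\spn{a_1,\ldots,a_k}=\spn{x_1,\ldots,x_k}^u=\spn{A_1}^u=\spn{A_1^w}$; when $\beta\ne1$ we have $w=vu$, and then $\spn{a_1,\ldots,a_k}=\spn{A_1}^u=\spn{A_0^v}^u=\spn{A_0^{vu}}=\spn{A_0^w}$, using the assumption $\spn{A_1}=\spn{A_0}^v$.

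There is no real obstacle here: the proof is essentially bookkeeping, the only subtlety being the conditional definition of $w$ in terms of $\beta$. I would also remark (since the surjectivity is implicitly needed for the uniform-distribution argument that follows) that the same formulas show $\phi$ is onto $S$: for any $(a_1,\ldots,a_k,\beta,w)\in S$, setting $(x_1,\ldots,x_k,u)=\psi(a_1,\ldots,a_k,\beta,w)$ gives $(x_1,\ldots,x_k)\in G(A_1,k)$ since conjugation preserves the property of generating the same group, and the corresponding $\phi$-image recovers the original tuple.
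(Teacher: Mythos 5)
Your proof is correct and follows exactly the paper's approach: the paper also exhibits the inverse $\psi(a_1,\ldots,a_k,\beta,w)=(x_1,\ldots,x_k,u)$ with $u=v^{\delta(\beta)-1}w$ and $x_i=a_i^{u^{-1}}$, relying on the same key point that $\beta$ (hence $\delta(\beta)$) is recoverable from $(a_1,\ldots,a_k)$ and $r$. You simply spell out the verification (and the membership $\phi(x_1,\ldots,x_k,u)\in S$) that the paper leaves to the reader.
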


\begin{proof}
Define $\psi(a_1,\ldots,a_k,\beta,w)=(x_1,\ldots,x_k,u)$ by
$u=v^{\delta(\beta)-1}w$ and $x_i=a_i^{u^{-1}}$ for all $i\le k$.
It is not hard to check that the map $\psi$ is the inverse of $\phi$.
\end{proof}

Observe now that if $(x_1,\ldots,x_k,u)$ is chosen at random uniformly
in $G(A_1,k)\times\spn U$, then $\phi(x_1,\ldots,x_k,u)$ has
distribution $D_{V^*,P}(A_0,A_1,U,r)$. By Claim we conclude that
$D_{V^*,P}(A_0,A_1,U,r)$ is uniform on $S$.

As a yet another consequence of Claim, observe that if a random tuple
$(a_1,\ldots,a_k,\allowbreak \beta,w)$ is uniformly distributed on $S$,
then its prefix $(a_1,\ldots,a_k)$ is a random element of $G(A,k)$,
where $A$ is a random element of the orbit
$\setdef{A_0^w}{w\in\spn U}=\setdef{A_1^w}{w\in\spn U}$
under the conjugating action of $\spn U$ on subsets of $S_m$.
This suggests the following way of generating a random element of $S$.
Choose uniformly at random $\alpha\in\{0,1\}$, $w\in\spn U$,
$(a_1,\ldots,a_k)\in G(A_\alpha^w,k)$ and, if
\begin{equation}\label{eq:guess}
\delta\of{V^*(A_0,A_1,U,r,a_1,\ldots,a_k)}=\alpha,
\end{equation}
output $(a_1,\ldots,a_k,V^*(A_0,A_1,U,r,a_1,\ldots,a_k),w)$;
otherwise repeat the same procedure once again independently.
Under the condition that \refeq{eq:guess} is fulfilled for the first
time in the $i$-th repetition, the output is uniformly distributed
on $S$. Notice now that this sampling procedure coincides with the
description of $D_M(A_0,A_1,U,r)$. It follows that $D_M(A_0,A_1,U,r)$
is uniform on $S$. The proof of the perfect zero-knowledge property
of our proof system for \conj\/ is complete.

The following corollary immediately follows from Theorem \ref{thm:grconj}
by Proposition \ref{prop:aha} and the result of~\cite{BHZ}.

\begin{corollary}\label{cor:grconj}
\conj\/ is in coAM and is therefore not NP-complete unless the
polynomial-time hierarchy collapses.
\end{corollary}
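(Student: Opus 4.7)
The plan is to chain together three ingredients, two of which are already in hand: Theorem \ref{thm:grconj} (just established), Proposition \ref{prop:aha} of Aiello--H\aa stad, and the Boppana--H\aa stad--Zachos result cited as \cite{BHZ}. The statement asserts two things, so I would separate them and prove each in turn.

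For the first claim, that \conj\ belongs to coAM, I would simply observe that Theorem \ref{thm:grconj} places \conj\ in PZK, and then invoke Proposition \ref{prop:aha}, which states $\mbox{PZK}\subseteq\mbox{coAM}$. No further work is needed here; the inclusion is immediate.

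For the second claim, the argument goes by contradiction with the polynomial-time hierarchy. Suppose \conj\ were NP-complete. As already noted in the introduction, \conj\ (and in fact its special case \simi) lies in NP, since a conjugating element $v\in\spn U$ is a polynomial-size certificate that can be verified in polynomial time using the Sims--Furst--Hopcroft--Luks algorithms for membership in a permutation group \cite{Sim,FHL}. NP-completeness of \conj\ together with its membership in coAM would force $\mbox{NP}\subseteq\mbox{coAM}$, because coAM is closed under polynomial-time many-one reductions. But \cite{BHZ} shows that this inclusion collapses the polynomial-time hierarchy to its second level, contradicting the hypothesis.

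There is no real obstacle: the corollary is a formal consequence of results already assembled, and the only point requiring a momentary check is the closure of coAM under polynomial-time reductions (equivalently, closure of AM under such reductions on the complementary side), which is standard. The substantive content of the corollary sits entirely in Theorem \ref{thm:grconj}.
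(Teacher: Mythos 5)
Your proposal is correct and follows exactly the route the paper takes: the paper's entire proof is the one-sentence observation that the corollary follows from Theorem \ref{thm:grconj}, Proposition \ref{prop:aha}, and the result of \cite{BHZ}. The extra details you supply (closure of coAM under polynomial-time many-one reductions, and the remark that \conj\ is in NP, which is in fact not even needed since NP-hardness alone suffices for the collapse argument) are just the standard unpacking of that citation.
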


We also give an alternative proof of this corollary that consists in direct
designing a two-round IPS $\ips{V,P}$ with error 1/4 for the complement of
\conj\/ and applying Proposition \ref{prop:gsi}. More precisely, we
deal with the {\ssc Group Non-Conjugacy} problem of recognizing,
given $A_0,A_1,U\subseteq S_m$, if there is no $v\in\spn U$ such that
$\spn{A_1}=\spn{A_0}^v$.

Set $k=8m$. The below IPS is composed twice in parallel.

\medskip

{\it 1st round.}

\noindent
$V$ chooses a random bit $\alpha\in\{0,1\}$, a random element
$u\in\spn U$, and a sequence of random independent elements
$a_1,\ldots,a_k\in\spn{A_\alpha^u}$.
Then $V$ sends $(a_1,\ldots,a_k)$ to $P$.

\smallskip

{\it 2nd round.}

\noindent
$P$ determines $\beta$ such that $\spn{a_1,\ldots,a_k}$ and $\spn{A_\beta}$
are conjugate via an element of $\spn U$ and sends $\beta$ to $V$.

\smallskip

$V$ accepts if $\beta=\alpha$ and rejects otherwise.

\medskip

\completeness
By Lemma \ref{lem:gen}~(2), $\spn{a_1,\ldots,a_k}=\spn{A_\alpha^u}$
with probability at least $1-2^{-m}$. If this happens and if
$\spn{A_0}$ and $\spn{A_1}$ are not conjugated via $\spn U$, the group
$\spn{a_1,\ldots,a_k}$ is conjugated via $\spn U$ with precisely one of
$\spn{A_0}$ and $\spn{A_1}$. In this case $P$ is able to determine
$\alpha$ correctly. Therefore $V$ accepts with probability at least
$1-2^{-m}$ in the atomic system and with probability at least $1-2^{-m+1}$
in the composed system.

\soundness
If $\spn{A_0}$ and $\spn{A_1}$ are conjugated via $\spn U$, then
for both values $\alpha=0$ and $\alpha=1$, the vector $(a_1,\ldots,a_k)$ has
the same distribution, namely, it is a random element of $A^k$,
where $A$ is a random element
of the orbit $\setdef{A_0^w}{w\in\spn U}=\setdef{A_1^w}{w\in\spn U}$
under the conjugating action of $\spn U$ on subsets of $S_m$.
It follows that, irrespective of his program,
$P$ guesses the true value of $\alpha$ with probability 1/2. With the same
probability $V$ accepts in the atomic system.
By Proposition \ref{prop:parrep}, in the composed system $V$ accepts
with probability 1/4.

Note that $\ips{V,P}$ is perfect zero-knowledge only for the honest verifier
but may reveal a non-trivial information for a cheating verifier.

\section{Element Conjugacy}

This section is devoted to the following problem.
\problem{\elconj}{$a_0,a_1\in S_m$, $U\subseteq S_m$}%
{$a_1=a_0^v$ for some $v\in\spn U$}
L.~Babai \cite{Bab:dm} considers a version of this problem with
$a_0,a_1\in\spn U$ and proves that it belongs to coAM.
His result holds true not only for permutation groups but also for
arbitrary finite groups with efficiently performable group operations,
in particular, for matrix groups over finite fields. It is easy to see that
Theorem \ref{thm:grconj} carries over to \elconj.

\begin{theorem}\label{thm:elconj}
\elconj\/ is in PZK.
\end{theorem}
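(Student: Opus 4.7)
The plan is to transfer the three-round protocol of Theorem \ref{thm:grconj} from generating sets to single elements. On input $(a_0,a_1,U)$ of length $n$, the atomic system runs as follows. The prover picks a random $u\in\spn U$, computes $a=a_1^u$, and sends $a$ to the verifier, who checks $a\in S_m$. The verifier replies with a random bit $\beta\in\{0,1\}$. If $\beta=1$ the prover sends $w=u$ and the verifier checks $w\in\spn U$ and $a=a_1^w$; if $\beta\ne 1$ the prover sends $w=vu$, where $v\in\spn U$ satisfies $a_1=a_0^v$, and the verifier checks $w\in\spn U$ and $a=a_0^w$. The full proof system is the $n$-fold sequential composition of this atomic system.

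Completeness is immediate from $a_1^u=a_0^{vu}$. For soundness, if both $a=a_1^u$ and $a=a_0^w$ with $u,w\in\spn U$, then $a_1=a_0^{wu^{-1}}$ with $wu^{-1}\in\spn U$, contradicting the no-instance assumption; hence for at least one value of $\beta$ the verifier rejects, giving atomic error at most $1/2$ and composite error at most $2^{-n}$ by Proposition \ref{prop:seqrep}~(1).

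For perfect zero-knowledge it suffices, by Proposition \ref{prop:seqrep}~(2), to simulate the atomic system in a black-box fashion. The simulator picks a random $w\in\spn U$ and a random bit $\alpha\in\{0,1\}$, computes $a=a_\alpha^w$, runs $V^*$ to obtain $\beta=V^*(a_0,a_1,U,r,a)$, and outputs the transcript $(r',a,\beta,w)$ if $\delta(\beta)=\alpha$, restarting with fresh randomness otherwise (where $\delta(\beta)=1$ if $\beta=1$ and $0$ otherwise). A technical simplification relative to the group case is that no analogue of Lemma \ref{lem:gen} is needed, since the first-round message is a single permutation rather than a random generating tuple. For each fixed verifier randomness $r$, I would exhibit a bijection $\phi\function{\spn U}{S}$ defined by $\phi(u)=(a_1^u,\beta,v^{1-\delta(\beta)}u)$ with $\beta=V^*(a_0,a_1,U,r,a_1^u)$, onto the set $S=\setdef{(a,\beta,w)}{w\in\spn U,\ \beta=V^*(a_0,a_1,U,r,a),\ a=a_{\delta(\beta)}^w}$, exactly paralleling the Claim in the proof of Theorem \ref{thm:grconj}.

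The orbit identity $\setdef{a_0^w}{w\in\spn U}=\setdef{a_1^w}{w\in\spn U}$ on yes-instances then implies that $\alpha$ is independent of $(r,a)$, so each attempt of the simulator succeeds with probability $1/2$, the expected running time is polynomial, and the conditional output distribution is uniform on $S$, matching $\view_{V^*,P}(a_0,a_1,U)$. The only bookkeeping step I expect to require care is checking that $\phi$ is a bijection with inverse $(a,\beta,w)\mapsto v^{\delta(\beta)-1}w$ landing in $\spn U$, but this follows from $v\in\spn U$ exactly as in the group case, so the adaptation is essentially mechanical and constitutes no real obstacle.
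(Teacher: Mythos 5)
Your proposal is correct and follows exactly the route the paper takes: the paper's own proof of Theorem \ref{thm:elconj} is just the observation that the \conj\ protocol carries over with single conjugated elements $a_\alpha^u$ in place of random generating tuples, which removes the need for Lemma \ref{lem:gen} and the $G(A,k)$ machinery, precisely as you note. Your fleshed-out completeness, soundness, and simulator/bijection details are the straightforward specializations the paper leaves implicit, so there is nothing to add.
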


The proof system designed in the preceding section for \conj\/ applies
to \elconj\/ as well. Moreover, the proof system for \elconj\/ is
considerably simpler. In place of
groups $\spn{A_0^u}$ and $\spn{A_1^u}$ we now deal with single
elements $a_0^u$ and $a_1^u$ and there is no complication with
representation of $\spn{A_0^u}$ and $\spn{A_1^u}$ by generating sets.

We now notice relations of \elconj\/ with the following problem
considered by E.~Luks \cite{Luk} (see also \cite[Section 6.5]{Bab:handbook}).
Given $x\in S_m$, let $C(x)$ denote the centralizer of $x$ in $S_m$.
\problem{\centr}%
{$x,y\in S_m$, $U\subseteq S_m$}{$C(x)\cap\spn U y\ne\emptyset$}

\noindent
Since, given a permutation $x$, one can efficiently find a list of
generators for $C(x)$, this is a particular case of the
{\ssc Coset Intersection} problem of recognizing, given $A,B\subseteq S_m$
and $s,t\in S_m$, if the cosets $\spn As$ and $\spn Bt$ intersect.

\begin{proposition}\label{prop:red}
\elconj\/ and \centr\/ are equivalent with respect to the polynomial-time
many-one reducibility.
\end{proposition}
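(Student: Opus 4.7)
The plan is to establish two polynomial-time many-one reductions, one in each direction.

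For the reduction from \elconj\/ to \centr, on input $(a_0,a_1,U)$ I first verify that $a_0$ and $a_1$ have matching cycle types; if not, the triple is a no-instance of \elconj\/ and is sent to a fixed no-instance of \centr. Otherwise, pairing up cycles of equal length yields in polynomial time a permutation $t\in S_m$ with $a_1=a_0^t$. The key observation is that the solution set of the equation $a_0^w=a_1$ is the right coset $C(a_0)\,t$, since $a_0^w=a_0^t$ is equivalent to $wt^{-1}\in C(a_0)$. Hence $(a_0,a_1,U)$ is a yes-instance of \elconj\/ if and only if $\spn U\cap C(a_0)\,t\ne\emptyset$, equivalently $C(a_0)\cap\spn U\,t^{-1}\ne\emptyset$, and I output the \centr\/ instance $(a_0,t^{-1},U)$.

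For the reduction from \centr\/ to \elconj, given $(x,y,U)$ I set $a_0=x$, $a_1=yxy^{-1}$, and output $(a_0,a_1,U)$. A short manipulation shows that for any $u\in S_m$ the identity $a_0^u=a_1$ rewrites as $u^{-1}xu=yxy^{-1}$, and then as $(uy)^{-1}x(uy)=x$, i.e., $uy\in C(x)$. Thus $u\mapsto uy$ is a bijection between the set of witnesses for $(a_0,a_1,U)$ as an \elconj\/ instance and the intersection $C(x)\cap\spn U\,y$, so the two instances share their yes/no answer.

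Both transformations are clearly polynomial-time computable (finding $t$ in the first direction reduces to matching equal-length cycles, and computing $yxy^{-1}$ in the second is immediate). The only substantive ingredient is the coset description $C(a_0)\,t$ for the solution set of the conjugation equation, which lets \centr\/ absorb \elconj; the converse direction is a routine algebraic rewriting. I anticipate no serious obstacle beyond carefully tracking left versus right cosets and handling the trivial cycle-type mismatch case.
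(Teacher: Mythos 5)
Your proof is correct and follows essentially the same route as the paper: reduce \elconj\/ to \centr\/ by finding a conjugator $t$ in $S_m$ and observing that the solution set of $a_0^w=a_1$ is the coset $C(a_0)t$, and reduce \centr\/ to \elconj\/ via the pair $(x,\,yxy^{-1})$. The extra details you supply (the cycle-type check and the explicit handling of the non-conjugate case) are just the natural fleshing-out of what the paper states more tersely.
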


\begin{proof}
We first reduce \elconj\/ to \centr.
Given permutations $a_0$ and $a_1$, it is easy to recognize if they
are conjugate in $S_m$ and, if so, to find an $s$ such that $a_1=a_0^s$.
The set of all $z\in S_m$ such that $a_1=a_0^z$ is the coset $C(a_0)s$.
It follows that $\spn U$ contains $v$ such that $a_1=a_0^v$ iff
$C(a_0)$ and $\spn Us^{-1}$ intersect.

A reduction from \centr\/ to \elconj\/ is based on the fact that
$C(x)$ and $\spn Uy$ intersect iff $x$ and $yxy^{-1}$ are conjugated
via an element of $\spn U$.
\end{proof}

Note that, while the reduction we described from \elconj\/ to \centr\/
works only for permutation groups, the reduction in the other direction
works equally well for arbitrary finite groups with efficiently performable
group operations, in particular, for matrix groups over finite fields.

We now have three different ways to prove that \elconj\/ is in coAM
and is therefore not NP-complete unless the polynomial-time hierarchy
collapses. First, this fact follows from Theorem \ref{thm:elconj}
by Proposition \ref{prop:aha}. Second, one can use Proposition \ref{prop:red}
and the result of \cite[Corollary 12.2~(d)]{Bab:dm} that {\ssc Coset
Intersection} is in coAM. Finally, one can design a constant-round
IPS for the complement of \elconj\/ as it is done in the preceding
section for the complement of \conj.

We conclude with two questions.

\begin{question}
Is there any reduction between \conj\/ and {\ssc Coset Intersection}?
We are not able to prove an analog of Proposition \ref{prop:red}
for groups because, given $A_0,A_1\subseteq S_m$ such that
$\spn{A_1}=\spn{A_0}^v$ for some $v\in S_m$, we cannot efficiently
find any $v$ with this property (otherwise we could efficiently recognize
the \simi).
\end{question}

\begin{question}
Does \elconj\/ reduce to \conj?
Whe\-re\-as Corollary \ref{cor:grconj} gives us an evidence that
\conj\/ is not NP-complete, we have no formal evidence supporting
our feeling that \conj\/ is not solvable efficiently. A reduction
from \elconj\/ could be considered such an evidence as \elconj\/
is not expected to be solvable in polynomial
time~\cite[page 1483]{Bab:kyoto}.

Note that the conjugacy of permutations $a_0$ and $a_1$ via an element
of a group $\spn U$ does not reduce to the conjugacy of the cyclic
groups $\spn{a_0}$ and $\spn{a_1}$ via $\spn U$ because
$\spn{a_0}$ and $\spn{a_1}$ can be conjugated by conjugation
of another pair of their generators, while such a new conjugation
may be not necessary via $\spn U$. For example, despite the groups
$\spn{(123)}$ and $\spn{(456)}$ are conjugated via
$\spn{(14)(26)(35)}$, the permutations $(123)$ and $(456)$ are not.
\end{question}

\subsubsection*{Acknowledgement}
I appreciate useful discussions with O.~G.~Ganyushkin.

\bigskip\bigskip

\mbox{}\hfill
\parbox{40mm}{Received 15.12.2001\\Accepted 14.03.2003}

\end{document}